\setlist[enumerate]{leftmargin=.5in}
\setlist[itemize]{leftmargin=.5in}
\newcommand{\B}{\mathcal{B}}
\newcommand{\D}{\mathcal{D}}
\newcommand{\f}{\varphi}
\newcommand{\la}{\lambda}
\newcommand{\La}{\Lambda}
\newcommand{\M}{\mathcal{M}}
\newcommand{\R}{\mathbb{R}}
\DeclareMathOperator{\essinf}{essinf}
\DeclareMathOperator{\esssup}{esssup}
\newtheorem{theorem}{Theorem}
\newtheorem{corollary}[theorem]{Corollary}
\newtheorem{definition}[theorem]{Definition}
\newtheorem{example}[theorem]{Example}
\newtheorem{proposition}[theorem]{Proposition}
\newenvironment{proof}[1][Proof]{\noindent\textbf{#1.} }{\ \rule{0.5em}{0.5em}}
\begin{document}

\title{An axiomatization of $\La$-quantiles}
\author{Fabio Bellini \footnote{Department of Statistics and Quantitative Methods, University of Milano-Bicocca, \texttt{fabio.bellini@unimib.it}} \vspace{0.3cm}, 
Ilaria Peri\footnote{Corresponding author, Department of Economics, Mathematics and Statistics, Birkbeck, University of London, \texttt{i.peri@bbk.ac.uk}}}
\date{\today }
\maketitle

\begin{abstract}
We give an axiomatic foundation to $\La$-quantiles, a family of generalized quantiles introduced by \cite{FMP2014} under the name of Lambda Value at Risk. Under mild assumptions, we show that these functionals are characterized by a property that we call ``locality'', that means that any change in the distribution of the probability mass that arises entirely above or below the value of the $\La$-quantile does not modify its value. 
We compare with a related axiomatization of the usual quantiles given by \cite{C2009}, based on the stronger property of ``ordinal covariance'', that means that quantiles are covariant with respect to increasing transformations. Further, we present a systematic treatment of the properties of $\La$-quantiles, refining some of the results of \cite{FMP2014} and \cite{BPR2017} and showing that in the case of a nonincreasing $\La$ the properties of $\La$-quantiles closely resemble those of the usual quantiles. 
\\[2mm]
\noindent \textbf{AMS Classification:} 62P05, 91G70 

\noindent \textbf{JEL Classification:}
91G70, 62P05.

\noindent \textbf{Keywords:} 
Risk Measures, $\La$-quantiles, Quantiles, Locality, Ordinal Covariance.

\end{abstract}

\section{Introduction}

Quantiles are perhaps the most basic type of statistical functional. Their widespread use is motivated by several reasons: they enjoy a number of nice theoretical properties, are well defined for any probability distribution, completely characterize the distribution itself, are robust with respect to outliers, and have a very simple probabilistic interpretation. 
Despite the well known drawbacks of Value at Risk such as its lack of subadditivity, that originated the stream of literature now known as axiomatic theory of risk measures, it is fair to say that quantiles do still play a central role both in practical and in mathematical finance.
Recently, \cite{FMP2014} introduced an interesting generalization of the very same notion of quantile, which maintains the basic structure of the usual definition but adds more flexibility in allowing a dependence of the probability level on the variable value; the resulting functional has been called Lambda Value at Risk and is defined as follows:
\[
\Lambda V@R(F):= -\inf \{x \in \mathbb{R}\,|\, F(x) >\Lambda(x)\},
\]
where $F$ is a distribution function and $\Lambda \colon \mathbb{R}\rightarrow (0,1)$ is a suitable monotone and right continuous function that is simply a constant in the case of usual quantiles. In \cite{BPR2017}, a first theoretical study of $\Lambda V@R$ has been provided, mainly focusing on the conditions under which it is robust, elicitable and consistent in the sense of \cite{D2016}. Further, $\Lambda V@R$ has been applied in financial risk management as an alternative to $V@R$ in order to assess capital requirements in \cite{CP2018} and \cite{HMP2018}. 

In this paper we first revisit the possible definitions and properties of $\La$-quantiles for a general function $\La$. We then focus on the case of a nonincreasing $\La$; under this assumption, we obtain in Proposition \ref{prop:properties_2}, Proposition \ref{prop:continuity} and  Proposition \ref{prop:continuity_2} several weak semicontinuity properties and mixture-convexity properties that closely resemble those of the usual quantiles. 
Intuitively, this is due to the fact that in this case the corresponding $\La$-quantiles can be visualized as intersection points or intersection intervals between $F$ and $\La$, exactly as the usual quantiles. Further, in the nonincreasing case the roles played by $F$ and $\La$ are essentially exchangeable, thus revealing a symmetry that is hidden in the definition of the usual quantiles. 

The financial interpretation of the assumption of a nonincreasing $\La$ will be discussed in Section \ref{sec:first} by means of a simple ``two-level'' Value at Risk example (Example 2.7).

The main theoretical contribution of the paper is an axiomatic foundation of $\Lambda$-quantiles given in Theorem \ref{th:axiomatization}. We pinpoint and formalize a crucial property, that we call ``locality'', that added to monotonicity, normalization and weak semicontinuity actually characterizes $\La$-quantile with a nonincreasing $\La$. We borrowed the name ``locality'' from an effective qualitative description of it that has been given by \cite{K2017} in the nice review about quantiles: 

\smallskip
``[...]\,quantiles are inherently \emph{local} and are nearly impervious to small perturbations of distributional mass. We can move mass around above and below the median without disturbing it at all provided, of course, that mass is not transferred from above the median to below or vice versa. This locality of the quantiles is highly advantageous for the same reasons that locally supported basis functions are advantageous in nonparametric regression: because it assures a form of robustness that is lacking in many conventional statistical procedures, notably those based on minimizing sums of squared residuals.\,[...]''.
\smallskip

The meaning of locality is obvious from the statistician's point of view but to the best of our knowledge has never been precisely formalized from a mathematical point of view. We propose the following definition: a statistical functional $T$ is local
if for any interval $(x,y) \subset \R$, 
\[
T(F)\in (x,y )\Rightarrow T(F)=T(G),\mbox{ for each } G \in \M \mbox{ with }%
G=F\mbox{ on }(x,y).
\]
This definition indeed captures the idea that modifying the distribution function $F$ in a way that leaves it unchanged in an arbitrarily small neighbourhood of $T(F)$ does not modify the value of $T$. In \cite{K2017}, the author was speaking about locality as a property of the usual quantiles; surprisingly, we show that this quite strong property holds also for more general functionals, and actually we prove in Theorem \ref{th:axiomatization} that under mild additional assumptions it axiomatizes $\La$-quantiles with a nonincreasing $\La$. 

It is also interesting to compare with a similar-in-spirit axiomatization of the usual quantiles given by \cite{C2009}, that is based on a proper stronger than locality called ordinal covariance, that requires that for all continuous bijection $\f \colon \R \to \R$ it holds that
\[
T(F \circ \f^{-1})= \f (T(F)).
\] 
The paper is structured as follows: Section 2 introduces the basic definitions of $\La$-quantiles and their properties, Section 3 presents the main axiomatization theorem, Section 4 concludes. All proofs are moved to the Appendix.

\section{Definitions and first properties of $\La$-quantiles}
\label{sec:first}

We denote respectively with $\M$ and $\M_c$ the set of probability measures on $(\R, \B(\R))$ and its subset of probability measures with compact support.  
Each $\mu \in \M$ is identified with its right continuous distribution function $F_\mu(x)=\mu(-\infty,x]$.
A functional $T \colon \M \to \R$ is said to be normalized if $T(\delta_x) =x$ for each $x \in \R$ and monotonic if $F_1 \geq_{st} F_2 \Rightarrow T(F_1) \geq T(F_2)$, where the usual stochastic order $\geq_{st}$ is defined by $
F_1 \geq_{st}F_2$ if $F_1 (x) \leq F_2(x)$, for each $x \in \R$.
We recall the definition of the left and right quantiles of a distribution function.
\begin{definition}\label{def:quantiles}
Let $F \in \M$ and $\lambda \in [0,1]$. Then
\begin{align*}
q_\lambda^-(F) &:=\inf \left\{ x\in \R\,|\,F(x)\geq \lambda \right\}= \sup \left\{ x\in \R\,|\,F(x)<\lambda \right\},\\
q_\lambda^+(F) &:=\inf \left\{ x\in \R\,|\,F(x)>\lambda \right\} =\sup \left\{ x\in \R\,|\,F(x) \leq \lambda \right\}.
\end{align*}
\end{definition}
As usual, we set by convention $\inf(\emptyset)=+\infty 
$ and $\sup (\emptyset)=-\infty$, so $q_0^-(F)=-\infty$, $q_1^+(F)=+\infty$, while $q_0^+(F)=\essinf(F)$, $q_1^-(F)=\esssup(F)$, where $\essinf(F)$ and $\esssup(F)$ denote respectively the essential infimum and the essential supremum of $F$.  

An interesting generalization of Definition \ref{def:quantiles} arises by replacing the constant $\lambda \in [0,1]$ with a variable threshold $\La \colon \R \to [0,1]$, called ``probability-loss function''
in \cite{FMP2014}, where the resulting functional (up to a sign change, and with minor assumptions on $\La$) has been called Lambda Value at Risk. To begin with a setting as general as possible, we start considering four possible definitions. 

\begin{definition}\label{def:Lambda}
Let $F \in \M$ and $\La \colon \R
\rightarrow [0,1]$. The $\La$-quantiles of $F$ are defined as follows:
\begin{align*}
q_\La^-(F) &:=\inf \left\{ x\in \R\,|\,F(x)\geq \La
(x)\right\},
&&q_\La^+(F) :=\inf \left\{ x\in \R\,|\,F(x)>\La (x)\right\}, \\
\tilde{q}_\La^-(F) &:=\sup \left\{ x\in \R\,|\,F(x)<\La
(x)\right\},
&&\tilde{q}_\La^+(F) :=\sup \left\{ x\in \R\,|\,F(x)\leq
\La (x)\right\}.
\end{align*}
\end{definition}

From a financial point of view, as discussed in \cite{FMP2014}, \cite{CP2018} and \cite{HMP2018}, it is sensible allowing a dependence of the ``acceptability'' probability level $\lambda$ on the corresponding amount of the loss $x$. 
A related idea has been investigated in \cite{BBM2019}, that constructed translation invariant risk measures based on benchmark loss distributions. 
At this level of generality, $\La$-quantiles satisfy only a few properties that are collected in the following proposition. 
Recall that $T \colon \M \to \R$ is said to be m-quasiconvex if for each $F_1, F_2 \in \M$ and $\alpha \in (0,1)$ it holds that 
\[
T(\alpha F_1 + (1-\alpha)F_2) \leq \max(T(F_1), T(F_2)),
\]
that is if for each $\gamma \in \R$ the lower level sets of $T$ are convex with respect to mixtures, and similarly $T$ is said to be m-quasiconcave if 
\[
T(\alpha F_1 + (1-\alpha)F_2) \geq \min(T(F_1), T(F_2)).
\]

\begin{proposition}\label{prop:properties}
Let  $q_\La ^-$, $q_\La^+$, $\tilde{q}_\La^-$, $\tilde{q}_\La^+$ be as in
Definition \ref{def:Lambda}. Then:
\begin{itemize}
\item [a)]  $q_\La^-(F)\leq q_\La^+(F)$ and $\tilde{q}_{\La}^{-}(F)\leq \tilde{q}_\La^+(F)$
\item [b)] $q_\La ^-$, $q_\La^+$, $\tilde{q}_\La^-$, $\tilde{q}_\La^+$ are monotonic 
\item [c)] $q_\La ^-$, $q_\La^+$, $\tilde{q}_\La^-$, $\tilde{q}_\La^+$ are monotonic with respect to $\La$, in the sense that if $\La_1 \leq \La_2$ then each of the four $\La_1$-quantiles is smaller than the corresponding $\La_2$-quantile
\item [d)] ${q}_\La^-$ and ${q}_\La^+$ are m-quasiconcave, $\tilde{q}_\La^-$ and $\tilde{q}_\La^+$ are m-quasiconvex. 
\end{itemize}
\end{proposition}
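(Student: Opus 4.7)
The plan is to prove all four assertions by direct manipulation of the four defining sets in Definition \ref{def:Lambda}, together with the elementary fact that infima reverse set inclusion while suprema preserve it. Each of the four functionals has the form $\inf$ or $\sup$ of a set $\{x \in \R : F(x) \bowtie \La(x)\}$ with $\bowtie \in \{<, \leq, \geq, >\}$, so the whole proposition reduces to keeping track of how these sets change when one varies $F$, varies $\La$, or takes convex mixtures.

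For (a), the strict inequality $F(x) > \La(x)$ implies the non-strict one, so $\{F > \La\} \subseteq \{F \geq \La\}$; taking infima reverses the inclusion and gives $q_\La^-(F) \leq q_\La^+(F)$. The inequality $\tilde{q}_\La^-(F) \leq \tilde{q}_\La^+(F)$ follows symmetrically from $\{F < \La\} \subseteq \{F \leq \La\}$ and the monotonicity of $\sup$ under inclusion. For (b), the hypothesis $F_1 \geq_{st} F_2$ means $F_1 \leq F_2$ pointwise; hence $\{F_1 \geq \La\} \subseteq \{F_2 \geq \La\}$, which passes to $q_\La^-(F_1) \geq q_\La^-(F_2)$, and the other three cases are handled analogously (noting for the tilde versions that $\{F_2 < \La\} \subseteq \{F_1 < \La\}$, whose suprema are monotonic in the correct direction). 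For (c), fixing $F$ and enlarging $\La$ shrinks $\{F \geq \La\}$ and $\{F > \La\}$ while enlarging $\{F < \La\}$ and $\{F \leq \La\}$, and the claimed monotonicities in $\La$ follow from inf/sup monotonicity on the resulting inclusions.

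Part (d) is the only one that uses something beyond set inclusion, namely that convex combinations preserve pointwise strict and non-strict inequalities. Writing $F = \alpha F_1 + (1-\alpha)F_2$ with $\alpha \in (0,1)$ and setting $m := \min(q_\La^-(F_1), q_\La^-(F_2))$, for any $x < m$ one has $x < q_\La^-(F_i)$ for $i=1,2$, so by definition of these infima $F_i(x) < \La(x)$; taking the convex combination preserves the strict inequality, so $F(x) < \La(x)$ and hence $q_\La^-(F) \geq m$, proving m-quasiconcavity. The argument for $q_\La^+$ is identical with $\leq$ in place of $<$. For the tilde functionals, working from the sup side: if $x > M := \max(\tilde{q}_\La^-(F_1), \tilde{q}_\La^-(F_2))$ then $x$ lies strictly above each defining sup, so $F_i(x) \geq \La(x)$ for $i=1,2$; this non-strict inequality is inherited by the convex combination, giving $\tilde{q}_\La^-(F) \leq M$. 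The case of $\tilde{q}_\La^+$ is handled the same way with strict inequalities.

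There is no real obstacle; the only subtlety worth flagging is careful bookkeeping of strict versus non-strict inequalities across the four functionals. The m-quasiconcavity/convexity dichotomy follows precisely from the fact that convex combinations preserve the direction of strict and non-strict inequalities, which naturally pairs the two inf-based functionals with m-quasiconcavity and the two sup-based functionals with m-quasiconvexity.
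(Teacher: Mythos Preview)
Your proof is correct and follows essentially the same approach as the paper's own proof: parts (a)--(c) are handled by direct set-inclusion arguments from Definition~\ref{def:Lambda}, and part (d) is proved exactly as in the paper by fixing $m=\min(q_\La^-(F_1),q_\La^-(F_2))$, observing that $F_i(x)<\La(x)$ for $x<m$, and passing this strict inequality through the convex combination. The only cosmetic difference is that the paper explicitly disposes of the trivial case $m=-\infty$, which in your formulation is handled implicitly (the set $\{x<m\}$ is then empty and the conclusion $q_\La^-(F)\geq -\infty$ is vacuous).
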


Without additional assumptions the four $\La$-quantiles can be all different, not finite, and even fail to satisfy the normalization property, as the following simple examples show. 
\begin{example}\label{ex_1}
Let $\La(x)=1_{\{x>0\}}$ and let
$F = \frac{1}{2} \delta_{-1} + \frac{1}{2}\delta_1$.
Then 
\[
q_\La^-(F)=-\infty, \, q_\La^+(F)=-1, \, \tilde{q}_\La^-(F)=1, \, \tilde{q}_\La^+(F)=+\infty.
\]
\end{example}
\begin{example}\label{ex_2}
Let $\La(x)=1_{\{x<0\}}$ and let $F=\delta_1$, $G=\delta_{-1}$. Then 
\begin{align*}
&q_\La^-(F)= 0, \, q_\La^+(F)= 1, \, \tilde{q}_\La^-(F)= 0, \, \tilde{q}_\La^+(F)= 1,\\
&q_\La^-(G)= -1, \, q_\La^+(G)= 0, \, \tilde{q}_\La^-(G)= -1, \, \tilde{q}_\La^+(G)= 0.
\end{align*}
\end{example}

As anticipated in the Introduction, an assumption that maintains a substantial level of generality with respect to the usual quantiles but at the same time avoids pathological situations is that the
function $\La$ is nonincreasing. 
It has to be remarked that 
\cite{FMP2014} and \cite{BPR2017} obtained a few results on $\La$-quantiles also under alternative assumptions on $\La$. 
We collect in the following proposition the properties of $\La$-quantiles in the nonincreasing case.
Recall that $T \colon \M \to \R$ is said to have the CxLS property if for each $F_1, F_2 \in \M$, $\alpha \in (0,1)$ and $\gamma \in \R$ it holds that 
\[
T(F_1) = T(F_2) = \gamma \Rightarrow T(\alpha F_1 + (1- \alpha) F_2) = \gamma, 
\]
that is if the level sets of $T$ are convex with respect to mixtures.

\begin{proposition}\label{prop:properties_2}
Let $q_\La ^-$, $q_\La^+$, $\tilde{q}_\La^-$, $\tilde{q}_\La^+$ be as in
Definition \ref{def:Lambda} and let $\La \colon \R \to [0,1]$ be nonincreasing. Then:
\begin{itemize}
\item [a)] $\tilde{q}_\La^-(F)=q_\La^-(F)$ and $\tilde{q}_\La^+(F)=q_\La^+(F)$
\item [b)] $q_\La^-$ and $q_\La^+$ are finite if and only if $\La \not \equiv 0$ and $\La \not \equiv 1$
\item [c)] if $\La \not \equiv 0$ and $\La \not \equiv 1$, then $q_\La^-$ is normalized if and only if $\La(x) >0$ for each $x \in \R$, and
$q_\La^+$ is normalized if and only if $\La(x) <1$ for each $x \in \R$. 
\item [d)] If $\La_1(x)=\La_2(x)$ on their common points of continuity, then $q_{\La_1}^-(F)=q_{\La_2}^-(F)$ and $q_{\La_1}^+(F)=q_{\La_2}^+(F)$.
\item [e)] $q_\La^-(F)$ and $q_\La^+(F)$ have the CxLS property. 
\end{itemize}
\end{proposition}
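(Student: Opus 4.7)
The proof plan rests on a single structural observation: since $F$ is nondecreasing and $\La$ is nonincreasing, $F-\La$ is nondecreasing, so $B^-(F):=\{x\in\R:F(x)\geq\La(x)\}$ is an upper set of $\R$ with lower-set complement $A^-(F):=\{x:F(x)<\La(x)\}$, and analogously $B^+(F):=\{x:F(x)>\La(x)\}$ is an upper set with lower-set complement. Part (a) is then immediate: $\inf B^-(F)=\sup A^-(F)$ and $\inf B^+(F)=\sup A^+(F)$, which are exactly the claimed identities.

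For (b), the tail behaviour $F(x)\to 0$ as $x\to-\infty$ and $F(x)\to 1$ as $x\to+\infty$, combined with the monotonicity of $\La$, is the only input needed. If $\La\not\equiv 0$, then $\La(x)\geq\La(x_0)>0$ for all $x\leq x_0$, so $F(x)<\La(x)$ on a left tail and $q_\La^-(F)>-\infty$; if $\La\not\equiv 1$, symmetrically $q_\La^-(F)<+\infty$. Conversely, $\La\equiv 0$ forces $q_\La^-(F)=-\infty$ and $\La\equiv 1$ forces $q_\La^-(F)=+\infty$ for any $F$ with unbounded support. For (c), one evaluates on $F=\delta_{x_0}$: every $x\geq x_0$ lies in $B^-(\delta_{x_0})$, while $x<x_0$ lies there iff $\La(x)=0$. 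Hence $q_\La^-(\delta_{x_0})=x_0$ for all $x_0$ precisely when $\La>0$ everywhere, and symmetrically for $q_\La^+$.

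Part (d) is the subtlest step. Since $\La_1,\La_2$ are monotone, their common continuity set $C$ is cocountable and hence dense in $\R$. Writing $B_i^-(F):=\{x:F(x)\geq\La_i(x)\}$, both are upper sets by the opening observation. For any $x>a_1:=\inf B_1^-(F)$, density of $C$ provides $x'\in(a_1,x)\cap C$; then $x'\in B_1^-(F)$ and the hypothesis $\La_1(x')=\La_2(x')$ yields $x'\in B_2^-(F)$, so $\inf B_2^-(F)\leq x'<x$. Letting $x\downarrow a_1$ gives $\inf B_2^-(F)\leq a_1$, and the reverse inequality follows by symmetry; the argument for $q_\La^+$ is identical.

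Finally, for (e), assume $q_\La^-(F_1)=q_\La^-(F_2)=\gamma$. By the upper-set property, $F_i(x)\geq\La(x)$ for every $x>\gamma$ and $i\in\{1,2\}$, so $\alpha F_1(x)+(1-\alpha)F_2(x)\geq\La(x)$, giving $q_\La^-(\alpha F_1+(1-\alpha)F_2)\leq\gamma$. Using part (a), $\gamma$ also equals $\sup A^-(F_i)$, so $F_i(x)<\La(x)$ for every $x<\gamma$, and the mixture remains strictly below $\La$ on $(-\infty,\gamma)$, yielding the reverse inequality. The same template handles $q_\La^+$. The only genuinely nontrivial ingredient is the density argument in (d); everything else follows mechanically from the upper-set/lower-set picture supplied by (a).
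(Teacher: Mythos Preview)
Your proof is correct and follows essentially the same route as the paper: both hinge on the observation that, for nonincreasing $\La$, the sets $\{F\geq\La\}$ and $\{F>\La\}$ are upper half-lines (the paper phrases this as ``disjoint intervals''), and both use density of the common continuity points for (d). The only cosmetic differences are that the paper omits (c) entirely and derives (e) in one line by combining (a) with the already-established m-quasiconcavity of $q_\La^\pm$ and m-quasiconvexity of $\tilde q_\La^\pm$ from Proposition~\ref{prop:properties}(d), whereas you unpack that computation directly.
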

\noindent
Thus in the nonincreasing case, $\La$-quantiles reduce from four to two, are finite and satisfy the normalization property if $0 < \La < 1$. Looking again at Example \ref{ex_2}, we see that indeed $\tilde{q}_\La^-(F)=q_\La^-(F)= 0$ and  $\tilde{q}_\La^+(F)=q_\La^+(F)=1$, and the normalization property fails since $\La$ takes the values $0$ and $1$. Item d) implies that in the nonincreasing case it is always possible to assume w.l.o.g. that $\La$ is either right or left continuous. Concerning e), we recall that
in \cite{BPR2017} it was shown that the CxLS property may fail for an increasing $\La$, although it is satisfied for increasing distributions if $\La$ is piecewise constant and has a finite number of jumps. \\
Perhaps the simplest nontrivial example of a $\La$ function satisfying the assumptions of Proposition \ref{prop:properties_2} is the following ``two-level'' or ``double'' quantile.
\begin{example}
Let $0 < \alpha < \beta <1$, $\bar{x} \in \R$ and $\La(x)=\beta 1_{\{x \leq \bar{x}\}} + \alpha 1_{\{x > \bar{x}\}}$. Then
\begin{equation}\label{two_level}
q_\La^-(F)=
\begin{cases}
q_\beta^-(F) &\text { if } q_\beta^-(F) \leq \bar{x}\\
q_\alpha^-(F) &\text { if } q_\alpha^-(F) \geq \bar{x}\\
\bar{x} &\text { if } q_\alpha^-(F) < \bar{x} < q_\beta^-(F).
\end{cases}
\end{equation}
\end{example}
This example has a simple financial interpretation that well illustrates the potential of $\La$-quantiles. Assume that the distribution function $F$ refers to future losses and that $\alpha=0.95$ and $\beta=0.99$. We interpret $\bar{x}$ as a reference level of the capital requirement that the risk manager (RM/she henceforth) has already established ex ante. This might come for example as an output of a simpler model, from a regulatory requirement, from her expert judgement or past experience. As always, the RM is pursuing contrasting objectives: to be conservative, but not too much, in order to avoid immobilizing an excessive amount of capital. A straightforward line of reasoning leading to \eqref{two_level} is the following:
if $q_{0.95} \geq \bar{x}$ then $q_{\La}=q_{0.95}$ and the RM is satisfied with a $95\%$-quantile that is enough conservative with respect to her reference level $\bar{x}$. If instead $q_{0.95} < \bar{x}$, the RM is worried to be not conservative enough and either directly sets $q_{\La}=\bar{x}$, or switches to the $99\%$-quantile if setting $q_{\La}=\bar{x}$ seems too conservative to her because $\bar{x} > q_{0.99}$. Summing up, moving from the usual quantiles $q_{0.95}$ or $q_{0.99}$ to $q_{\La}$ given by \eqref{two_level} allows the RM to keep into account also her a priori calculation of the capital requirement $\bar{x}$.

We end the section by discussing the weak semicontinuity properties of $\La$-quantiles, complementing a few results of \cite{FMP2014} and \cite{BPR2017}. 
Recall first that $F_n \overset{d}{\to} F$ if $F_n(x) \to F(x)$ in the continuity points of $F$, and $T \colon \M \to \R$ is said to be respectively
weakly continuous at $F$ if $F_n \overset{d}{\to} F \Rightarrow T(F_n) \to T(F)$,
weakly lower semicontinuous at $F$ if
$F_n \overset{d}{\to} F \Rightarrow T(F) \leq \liminf_{n \rightarrow +\infty} T(F_n)$, and
weakly upper semicontinuous at $F$ if $F_n \overset{d}{\to} F \Rightarrow T(F) \geq \limsup_{n \rightarrow +\infty} T(F_n)$. \\
In \cite{FMP2014} it was shown that  if $\La$ is right continuous with $\sup_{x \in \R} \La(x) < 1$, then $q_\La^+$ is weakly upper semicontinuous. Notice that \cite{FMP2014} adopted a slightly different definition of $q_\La^+$, that is equivalent to Definition \ref{def:Lambda} in the case of a nonincreasing $\La$, but not equivalent in more general cases.  
In \cite{BPR2017} it was shown that 
if $\La$ is continuous, monotone (either nonincreasing or nondecreasing) and satisfies $\inf_{x \in \R} \La(x) > 0, \sup_{x \in \R} \La(x) < 1$,
then 
$q_\La^+$ is weakly continuous on the set of distribution functions that do not coincide with $\La$ on any interval. 

In the next proposition we show that in the nonincreasing case the weak continuity properties of $\La$-quantiles closely resemble those of the usual quantiles, in the sense that the left $\La$-quantile is  weakly lower semicontinuous and the right $\La$-quantile is weakly upper semicontinuous. The proof is based on a variant of Proposition 2.5 from \cite{FMP2014}, that is recalled in the Appendix.

\begin{proposition}\label{prop:continuity}
Let $\La \colon \R\rightarrow [0,1]$ be nonincreasing. Then
$q_\La^-$ is weakly lower semicontinuous and $q_\La^+$ is weakly upper semicontinuous. 
\end{proposition}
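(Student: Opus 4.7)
My plan is to prove each of the two semicontinuity statements by contradiction, exploiting the definition of $q_\La^\pm(F)$ as an infimum together with the monotonicity of $F_n$, $F$ and $\La$. In both cases I would pick an auxiliary point $x$ lying strictly between the two quantities being compared, and further require $x$ to be a continuity point of $F$ (possible since the discontinuity set of a distribution function is at most countable) so that $F_n(x) \to F(x)$. Passing to the limit of two opposite inequalities at $x$ then yields the contradiction.

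For the lower semicontinuity of $q_\La^-$, suppose towards a contradiction that $q_\La^-(F) > \liminf_n q_\La^-(F_n)$ and take $x$ a continuity point of $F$ with $\liminf_n q_\La^-(F_n) < x < q_\La^-(F)$. From $x < q_\La^-(F) = \inf\{t : F(t) \geq \La(t)\}$, the point $x$ is not in this set, so $F(x) < \La(x)$. Along a subsequence $(n_k)$ with $q_\La^-(F_{n_k}) < x$, the definition of infimum produces a witness $y_k < x$ with $F_{n_k}(y_k) \geq \La(y_k)$, and combining $F_{n_k}$ nondecreasing with $\La$ nonincreasing gives $F_{n_k}(x) \geq F_{n_k}(y_k) \geq \La(y_k) \geq \La(x)$. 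Letting $k \to \infty$ at the continuity point $x$ yields $F(x) \geq \La(x)$, a contradiction. For the upper semicontinuity of $q_\La^+$, assume similarly $q_\La^+(F) < \limsup_n q_\La^+(F_n)$ and pick $x$ between them, a continuity point of $F$. From $x > q_\La^+(F) = \inf\{t : F(t) > \La(t)\}$ the definition of infimum furnishes some $y < x$ with $F(y) > \La(y)$, and the monotonicities give $F(x) \geq F(y) > \La(y) \geq \La(x)$, whence $F(x) > \La(x)$. Along a subsequence with $q_\La^+(F_{n_k}) > x$, the point $x$ is a lower bound of $\{t : F_{n_k}(t) > \La(t)\}$, so $F_{n_k}(x) \leq \La(x)$; passing to the limit produces $F(x) \leq \La(x)$, the desired contradiction.

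The main delicacy lies in handling the asymmetry between the weak inequality $\geq$ in the definition of $q_\La^-$ and the strict inequality $>$ in the definition of $q_\La^+$: this asymmetry forces the witness to come from the $F_{n_k}$ side in the first case but from the limit $F$ in the second, which is precisely why one obtains lower (and not upper) semicontinuity for $q_\La^-$ and upper (and not lower) semicontinuity for $q_\La^+$, mirroring what happens for the ordinary left and right quantiles. The hint that the proof uses a variant of Proposition 2.5 of \cite{FMP2014} suggests packaging the witness step into a preliminary characterization of $q_\La^\pm$ through intersection conditions between $F$ and $\La$, after which the semicontinuity follows cleanly.
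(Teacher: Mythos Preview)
Your argument is correct and the core contradiction step---pick a continuity point $x$ strictly between the two values, derive opposite inequalities $F(x)<\La(x)$ and $F(x)\geq\La(x)$ (resp.\ $>$ and $\leq$) by combining the definition of the $\La$-quantile with the monotonicity of $F$ and $\La$, then pass to the limit---is exactly what the paper does as well.

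The one structural difference is that the paper first invokes the characterization recalled in Proposition~\ref{lemma:FMP} (the variant of Proposition~2.5 in \cite{FMP2014}): for a monotonic functional, weak lower semicontinuity is equivalent to continuity along sequences $F_n\downarrow F$ in the continuity points of $F$. This reduction lets the paper work with a \emph{monotone} sequence, so that $z_n:=q_\La^-(F_n)$ is automatically nondecreasing and bounded above by $q_\La^-(F)$, and no subsequence extraction is needed. You instead handle an arbitrary weakly convergent sequence directly by passing to a subsequence realizing the $\liminf$ (resp.\ $\limsup$) and producing explicit witnesses $y_k$. Your route is slightly more elementary in that it does not rely on the auxiliary characterization; the paper's route is a bit cleaner once that lemma is in hand. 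Your closing guess about what the FMP proposition does is off: it is not an intersection characterization of $q_\La^\pm$, but precisely this reduction-to-monotone-sequences device.
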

As an immediate consequence we have the following.
\begin{corollary}\label{cor:continuity}
Let $\La \colon \R\rightarrow [0,1]$ be nonincreasing. Then:
\begin{itemize}
\item [a)] if $q_\La^-(F) = q_\La^+(F)$ then
$q_\La^-(F)$ and $q_\La^+$ are weakly continuous at $F$
\item [b)] if $F$ is increasing then $q_\La^-$ and $q_\La^+$ are weakly continuous at $F$
\item [c)] if $\La$ is decreasing then $q_\La^-$ and $q_\La^+$ are weakly continuous at each $F \in \M$.
\end{itemize}
\end{corollary}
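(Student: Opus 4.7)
The plan is to derive all three items from Proposition~\ref{prop:continuity} by first proving (a) directly and then reducing (b) and (c) to (a) through the identity $q_\La^-(F)=q_\La^+(F)$.

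For (a), I would fix a sequence $F_n \overset{d}{\to} F$ and set $c:=q_\La^-(F)=q_\La^+(F)$. Proposition~\ref{prop:continuity} gives
\[
c \;=\; q_\La^-(F)\;\le\;\liminf_{n\to\infty} q_\La^-(F_n), \qquad \limsup_{n\to\infty} q_\La^+(F_n)\;\le\; q_\La^+(F)\;=\;c.
\]
Combining this with the inequality $q_\La^-(F_n)\le q_\La^+(F_n)$ from Proposition~\ref{prop:properties}(a) produces the chain
\[
c \;\le\; \liminf_n q_\La^-(F_n) \;\le\; \limsup_n q_\La^-(F_n) \;\le\; \limsup_n q_\La^+(F_n) \;\le\; c,
\]
and symmetrically with $q_\La^+$ in place of $q_\La^-$. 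Hence both sequences converge to $c$, proving weak continuity at $F$.

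For (b) and (c), it suffices to prove that under either hypothesis, $q_\La^-(F)=q_\La^+(F)$. Write $c:=q_\La^-(F)$, $c':=q_\La^+(F)$ and suppose toward contradiction that $c<c'$. Since $F$ is nondecreasing and $\La$ is nonincreasing, $F-\La$ is nondecreasing, so the set $\{y:F(y)\ge\La(y)\}$ is upward closed and hence equal to $[c,\infty)$ or $(c,\infty)$; in particular $F(x)\ge\La(x)$ for every $x>c$. On the other hand, $x<c'$ means $x\notin\{y:F(y)>\La(y)\}$, so $F(x)\le\La(x)$. Therefore $F\equiv\La$ on the nondegenerate open interval $(c,c')$. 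Under hypothesis (b), $F$ is strictly increasing, forcing $\La$ to be strictly increasing on $(c,c')$, which contradicts its nonincreasingness. Under hypothesis (c), $\La$ is strictly decreasing on $(c,c')$, forcing $F$ to be strictly decreasing there, contradicting the monotonicity of a distribution function. In either case $c=c'$ and (a) applies.

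The only mildly delicate step is the derivation of $F\equiv\La$ on $(c,c')$ from the two ``inf'' definitions: one needs the upward-closedness of $\{F\ge\La\}$, which is exactly where nonincreasingness of $\La$ enters. Once this is in hand, both (b) and (c) are immediate contradictions between strict monotonicity on one side and nonstrict monotonicity on the other.
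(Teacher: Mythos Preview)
Your proof is correct and follows essentially the same approach as the paper: part (a) is proved by sandwiching via Proposition~\ref{prop:continuity} and Proposition~\ref{prop:properties}(a), and parts (b) and (c) are reduced to (a) by showing $q_\La^-(F)=q_\La^+(F)$. The only difference is that the paper merely invokes ``uniqueness of $\La$-quantiles when either $F$ is increasing or $\La$ is decreasing'' without details, whereas you spell out the argument that $F\equiv\La$ on $(c,c')$ and derive the contradiction from strict monotonicity --- a welcome elaboration.
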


\noindent
As anticipated in the Introduction, an interesting aspect of the definition of $\La$-quantiles is that it reveals a symmetry between the roles played by $F$ and $\La$, that is hidden in the definition of the usual quantiles. We have seen a first instance of this symmetry in the monotonicity property of Proposition \ref{prop:properties}, that holds either with respect to $F$ or with respect to $\La$. 
Here, we have another example of this symmetry in the following continuity result with respect to $\La$, that holds when $F$ is fixed and $\La_n \to \La$ in the sense specified by the proposition below.

\begin{proposition}\label{prop:continuity_2}
Let $\La_n, \La \colon \R\rightarrow [0,1]$ be nonincreasing. Then:
\begin{align*}
&\La _{n}(x)\uparrow \La (x) \text { in the continuity points of }%
\La \Rightarrow q_{\La _{n}}^{-}(F)\rightarrow q_\La^-(F)\\
&\La _{n}(x)\downarrow \La (x) \text{ in the continuity points of }%
\La \Rightarrow q_{\La _{n}}^{+}(F)\rightarrow q_\La^+(F).
\end{align*}
As a corollary, if $F$ is increasing then 
\[
\La _{n}(x) \to \La (x) \text{ in the continuity points of } \La \Rightarrow q_{\La_n}(F) \to q_\La(F).
\]
\end{proposition}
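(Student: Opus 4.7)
My plan is to exploit the fact that when $\La$ is nonincreasing the difference $F - \La$ is nondecreasing, so both $A := \{x \in \R : F(x) \geq \La(x)\}$ and $A' := \{x \in \R : F(x) > \La(x)\}$ are upward-closed with $\inf A = q_\La^-(F)$ and $\inf A' = q_\La^+(F)$, and the same structure holds for each $\La_n$. Because a monotone function has at most countably many discontinuities, continuity points of $\La$ are dense in $\R$, and the whole argument will consist of placing such a continuity point carefully just above or just below the target quantile.

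For the first statement, I set $q := q_\La^-(F)$ and $A_n := \{F \geq \La_n\}$ and split the convergence into two inequalities. For the upper bound $\limsup_n q_{\La_n}^-(F) \leq q$ (assuming $q$ finite; the cases $q = \pm\infty$ are routine) I fix $\epsilon > 0$ and pick a continuity point $y^* \in (q, q + \epsilon)$ of $\La$; upward-closedness of $A$ gives $F(y^*) \geq \La(y^*)$, and $\La_n(y^*) \leq \La(y^*)$ (valid for every $n$ since $\La_n(y^*) \uparrow \La(y^*)$ at the continuity point) yields $y^* \in A_n$, whence $q_{\La_n}^-(F) \leq y^*$. For the matching lower bound $\liminf_n q_{\La_n}^-(F) \geq q$, I suppose by contradiction that $q_{\La_{n_k}}^-(F) < q - \epsilon$ along a subsequence, pick a continuity point $y^* \in (q - \epsilon, q)$ of $\La$, and use upward-closedness of $A_{n_k}$ to obtain $F(y^*) \geq \La_{n_k}(y^*)$; passing to the limit along $k$ with $\La_{n_k}(y^*) \to \La(y^*)$ yields $F(y^*) \geq \La(y^*)$, i.e.\ $y^* \in A$, contradicting $y^* < q = \inf A$.

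The second statement follows by a mirror-image argument for $\La_n \downarrow \La$ and $q_\La^+$, in which the strict inequality $F(y^*) > \La(y^*)$ combined with $\La_n(y^*) \downarrow \La(y^*)$ yields $F(y^*) > \La_n(y^*)$ eventually (the only step that is ``eventually'' rather than ``for every $n$''), and the lower bound uses $F(y^*) > \La_n(y^*) \geq \La(y^*)$ to reach the same contradiction. For the corollary I sandwich the general sequence $\La_n$ between its envelopes $L_n^\uparrow(x) := \inf_{k \geq n} \La_k(x)$ and $L_n^\downarrow(x) := \sup_{k \geq n} \La_k(x)$; each is nonincreasing in $x$ and, at every continuity point of $\La$, $L_n^\uparrow \uparrow \La$ and $L_n^\downarrow \downarrow \La$. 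Monotonicity in $\La$ from Proposition \ref{prop:properties}(c) gives
\[
q_{L_n^\uparrow}^-(F) \leq q_{\La_n}^-(F) \leq q_{\La_n}^+(F) \leq q_{L_n^\downarrow}^+(F),
\]
and by the two parts already proved the outer terms converge to $q_\La^-(F)$ and $q_\La^+(F)$; when $F$ is strictly increasing $F - \La$ is strictly increasing, so $\inf A = \inf A'$ and hence $q_\La^-(F) = q_\La^+(F) =: q_\La(F)$, and the squeeze delivers $q_{\La_n}(F) \to q_\La(F)$.

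The main obstacle is not deep but genuinely requires care: the hypothesis only controls $\La_n$ at continuity points of $\La$, so one cannot use pointwise monotonicity of $\La_n$ against $\La$ at arbitrary points; this is exactly why each step above inserts an auxiliary $y^*$ that is a continuity point of $\La$ lying in an arbitrarily small interval adjacent to the target quantile, which is always feasible since discontinuities of a monotone function form a countable set.
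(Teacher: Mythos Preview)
Your proof is correct and follows essentially the same route that the paper indicates (the paper omits the proof, saying only that it is ``similar to the one of Proposition~\ref{prop:continuity}''): you exploit that $F-\La$ is nondecreasing so the relevant level sets are half-lines, place a continuity point of $\La$ in an arbitrarily small gap, and derive a contradiction, exactly as the proof of Proposition~\ref{prop:continuity} does with the roles of $F$ and $\La$ exchanged. One minor stylistic difference is that the paper's template first uses monotonicity to make the sequence $q_{\La_n}^-(F)$ itself monotone and then rules out a single strict inequality, whereas you bound the $\limsup$ and $\liminf$ separately; your version has the advantage of not needing $\La_n \leq \La_{n+1}$ to hold at \emph{every} point. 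Your envelope argument for the corollary, sandwiching $\La_n$ between $L_n^\uparrow=\inf_{k\ge n}\La_k$ and $L_n^\downarrow=\sup_{k\ge n}\La_k$ and invoking Proposition~\ref{prop:properties}(c), is a clean way to pass from monotone to general convergence that the paper does not spell out.
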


\noindent
The proof is similar to the one of Proposition \ref{prop:continuity} and omitted. 

\section{Axiomatization of $\La$-quantiles}
\label{sec:axiomatization}

In this section we provide an axiomatization of $\La$-quantiles with a nonincreasing $\La$. As anticipated in the Introduction, 
the crucial property that axiomatizes $\La$-quantiles can be intuitively described as follows: however the
distribution of the probability mass is modified, either on the left or on
the right of the value of the functional, the functional itself remains unchanged. Formally, we give the following. 

\begin{definition} \label{def:locality}
Let $T\colon \M \to \R$. We say that $T$ is
\emph{local} if for any $(x,y) \subset \R$, 
\[
T(F)\in (x,y )\Rightarrow T(F)=T(G),\mbox{ for each } G \in \M \mbox{ with }%
G=F\mbox{ on }(x,y).
\]
\end{definition}
In other words, the value of the functional $T(F)$ is determined by the values of the distribution function $F$ in arbitrarily small open neighbourhoods $(x,y)$ of the value of the functional itself.
The usual quantiles are local, and from Definition \ref{def:Lambda} it follows immediately that more generally also $\La$-quantiles with a nonincreasing $\La$ are local. On the contrary, $\La$-quantiles with an increasing $\La$ do not satisfy the locality property, as it can be seen in the following example. 
\begin{example}
Let $\La(x) = \Phi(x)$, the distribution function of a standard normal, and let $F(x)=\Phi(2x)$.\ Then $q_\La^-(F)=\inf \left\{ x\in \R\,|\,F(x)\geq \La
(x)\right\}=0$, that is the only intersection point between $F$ and $\La$. Letting $G$ be any distribution such that $G(x)=F(x)$ for $x \in (-1, +\infty)$ and $G(\bar{x})\geq \La(\bar{x})$ for some $\bar{x} < -1$, it follows that 
$q_\La^-(G)<\bar{x}<0$, thus violating the locality. 
\end{example}

The following theorem is the main result of the paper and shows that under mild assumptions a local functional is a $\La$-quantile with a nonincreasing $\La$. 	

\begin{theorem}\label{th:axiomatization}
Let $T \colon \M \to \R$ be normalized, monotonic, local and weakly lower semicontinuous. Then there exists a nonincreasing $\La \colon \R \to [0,1]$ such that $T(F)=q_{\La
}^{-}(F)$. Similarly, if under the same assumptions $T$ is weakly upper semicontinuous, then there exists a nonincreasing $\La \colon \R \to [0,1]$ such that $T(F)=q_{\La
}^{+}(F)$.
\end{theorem}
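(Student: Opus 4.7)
The plan is to reconstruct $\La$ from $T$ by probing it on two-point distributions: for $x<M$ set $F_{x,\la,M}:=\la\delta_x+(1-\la)\delta_M$, and define
\[
\La(x):=\sup\bigl\{\la\in[0,1]:T(F_{x,\la,M})>x\text{ for some }M>x\bigr\}.
\]
A preliminary but central consequence of locality---used throughout---is that whenever the supremum is witnessed by some $M_0$, in fact $T(F_{x,\la,M})>x$ for \emph{every} $M>x$: otherwise $T(F_{x,\la,M})=x$ (monotonicity on $\delta_x$ giving $T\geq x$), and locality applied on a small open interval around $x$, on which $F_{x,\la,M}$ and $F_{x,\la,M_0}$ coincide, would force $T(F_{x,\la,M})=T(F_{x,\la,M_0})>x$. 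The same idea shows $\La$ is nonincreasing: for $x_1<x_2$ and $\la<\La(x_2)$, the distributions $F_{x_1,\la,M}$ and $F_{x_2,\la,M}$ agree on $(x_2,\infty)$, so locality yields $T(F_{x_1,\la,M})=T(F_{x_2,\la,M})>x_1$ and hence $\la\leq\La(x_1)$.

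For the upper bound $T(F)\leq q_\La^-(F)=:z$, I would first handle compactly supported $F$ with $F(z)>\La(z)$: taking $M>\esssup F$ and $G:=F(z)\delta_z+(1-F(z))\delta_M$, a direct check gives $G\geq_{st}F$, while $F(z)>\La(z)$ forces $T(G)\leq z$ by the definition of $\La$, and monotonicity on $\delta_z$ gives $T(G)\geq z$, so $T(F)\leq T(G)=z$. The equality case $F(z)=\La(z)$ is reduced to the strict one via the perturbations $F_n:=(1-1/n)F+(1/n)\delta_z$, for which $F_n(z)>\La(z)$, together with weak lower semicontinuity; a right-truncation plus LSC argument removes the compact support assumption.

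For the lower bound $T(F)\geq z$, I would argue by contradiction. Suppose $T(F)=z^*<z$ and pick $z^{**}\in(z^*,z)$; then $F(z^{**})<\La(z^{**})$, so one can choose $\la_0\in(F(z^{**}),\La(z^{**}))$ and, using right-continuity of $F$, $M>z^{**}$ so close to $z^{**}$ that $F(M^-)\leq\la_0$. Let
\[
\tilde F(y):=\begin{cases}F(y),&y<z^{**},\\ \la_0,&z^{**}\leq y<M,\\ 1,&y\geq M.\end{cases}
\]
By construction $\tilde F(y)\geq F(y)$ for every $y$, so $F\geq_{st}\tilde F$ and monotonicity gives $T(\tilde F)\leq T(F)=z^*$; at the same time $\tilde F$ coincides with $F_{z^{**},\la_0,M}$ on $(z^{**},\infty)$, so locality together with the preliminary observation yield $T(\tilde F)=T(F_{z^{**},\la_0,M})>z^{**}$. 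Combining, $z^{**}<T(\tilde F)\leq z^*<z^{**}$, a contradiction. The case in which $T$ is weakly upper semicontinuous, giving $T=q_\La^+$, is proved by an obvious symmetric construction.

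The hard part is the lower-bound step: $\tilde F$ must be dominated by $F$ pointwise (so monotonicity bounds $T(\tilde F)$ above by $T(F)$) while simultaneously being a Bernoulli on $(z^{**},\infty)$ (so that locality imports the strict lower bound $T(F_{z^{**},\la_0,M})>z^{**}$); balancing these two requirements forces the fine calibration of $\la_0$ just below $\La(z^{**})$ and $M$ just above $z^{**}$, and even the seemingly innocuous fact that $T(F_{x,\la,M})$ does not really depend on $M$ is itself a nontrivial consequence of locality.
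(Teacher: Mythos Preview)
Your approach is correct and genuinely different from the paper's. The paper proceeds structurally: it classifies the behaviour of $T$ on dyadic laws $B_{x,y}^\la=\la\delta_x+(1-\la)\delta_y$ as ``internal'' or ``boundary'', builds a nonincreasing function $z(\la)$ recording the unique possible internal value, defines $\La$ as the generalized inverse of $z$, and then climbs from dyadics to finitely supported to compactly supported to general $F$ by locality and Proposition~\ref{lemma:FMP}. You instead define $\La$ in one stroke from the two-point probes and prove $T(F)=q_\La^-(F)$ by two comparison inequalities, each obtained by sandwiching $F$ (or a modification of it) against a Bernoulli law on which $T$ is already controlled. Your route is shorter and avoids the finite-support step entirely; the paper's route yields finer information (the seven-case decomposition of $(0,1)$ into $I_2^r\cup I_1\cup I_2^\ell$) that you do not need.

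Two points deserve tightening. First, your upper-bound argument treats only the cases $F(z)>\La(z)$ and $F(z)=\La(z)$; the case $F(z)<\La(z)$ is not discussed. In fact it cannot occur, because your $\La$ is right continuous under the lower semicontinuity hypothesis: if $x_n\downarrow x$ and $\la\in(\lim\La(x_n),\La(x))$, then $T(F_{x_n,\la,M})=x_n$ for each $n$, and since $F_{x_n,\la,M}\overset{d}{\to}F_{x,\la,M}$ weak lower semicontinuity gives $T(F_{x,\la,M})\leq x$, contradicting $\la<\La(x)$. With $\La$ right continuous and nonincreasing, the set $\{x:F(x)\geq\La(x)\}$ is a closed half-line, so $F(z)\geq\La(z)$ automatically. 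Second, in the equality case your perturbation $F_n=(1-1/n)F+(1/n)\delta_z$ gives $F_n(z)>\La(z)$ only when $\La(z)<1$; the edge $\La(z)=1$ (which forces $F(z)=1$ and hence $F\leq_{st}\delta_z$) should be handled separately. Both are easy fixes and do not affect the soundness of your strategy.
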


The proof is postponed to the Appendix. It is very interesting to compare our result with an axiomatization of the usual quantiles given by \cite{C2009}, based on the property of ordinal covariance, that we formally recall below. 
\begin{definition}\label{def:ord_cov}
A functional $T \colon \M_c \to \R$ satisfies the ordinal covariance property if for all strictly increasing and continuous $\f \colon \R \to \R$ satisfying $\f(\R)=\R$ it holds that
\begin{equation*}
T(F \circ \f^{-1})= \f (T(F)).
\end{equation*}
\end{definition}
Ordinal covariance is a very strong property that implies normalization, translation invariance and positive homogeneity. Chambers proved the following:
\begin{theorem}\label{th:Chambers_1}
Let $T \colon \M_c \to \R$ be monotonic and satisfy the ordinal covariance property. 
If $T$ is weakly upper semicontinuous, then there exists $\la \in [0,1)$ such that $T(F)=q_\lambda^+(F)$, while if $T$ is weakly lower semicontinuos then there exist $\la \in (0,1]$ such that $T(F)=q_\lambda^-(F)$.
\end{theorem}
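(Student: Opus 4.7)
The strategy is to reduce the problem to two-point distributions via ordinal covariance, extract a single threshold $\lambda$ from them, and then sandwich an arbitrary $F \in \M_c$ by well-chosen two-point distributions. A preliminary observation is that normalization $T(\delta_x) = x$ follows directly from ordinal covariance: any strictly increasing continuous bijection $\f \colon \R \to \R$ fixing $x$ satisfies $\delta_x \circ \f^{-1} = \delta_x$, so $\f(T(\delta_x)) = T(\delta_x)$; since $x$ is the only common fixed point of all such $\f$, this forces $T(\delta_x) = x$.

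Next, consider the canonical two-point family $F_\alpha := \alpha \delta_0 + (1-\alpha)\delta_1$ for $\alpha \in (0,1)$. Every strictly increasing continuous bijection $\f$ of $\R$ fixing both $0$ and $1$ leaves $F_\alpha$ invariant, so $\f(T(F_\alpha)) = T(F_\alpha)$; combined with monotonicity (which gives $T(F_\alpha) \in [0,1]$), this forces $h(\alpha) := T(F_\alpha) \in \{0,1\}$. Since $\alpha_1 \leq \alpha_2$ implies $F_{\alpha_1} \geq_{st} F_{\alpha_2}$, the function $h$ is nonincreasing and hence has a threshold $\lambda \in [0,1]$ with $h(\alpha)=1$ for $\alpha<\lambda$ and $h(\alpha)=0$ for $\alpha>\lambda$. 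Assuming weak lower semicontinuity, taking $\alpha_n \downarrow \lambda$ yields $F_{\alpha_n} \overset{d}{\to} F_\lambda$ with $h(\alpha_n) = 0$, which forces $h(\lambda) = 0$; this matches exactly $q_\lambda^-(F_\alpha)$. Applying weak LSC along $\alpha_n \downarrow 0$ (with $F_{\alpha_n} \overset{d}{\to} \delta_1$ and $T(\delta_1) = 1$) rules out the degenerate $\lambda = 0$, yielding $\lambda \in (0,1]$. The weakly upper semicontinuous case is symmetric and gives $\lambda \in [0,1)$ matching $q_\lambda^+$.

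Ordinal covariance then extends the two-point analysis to arbitrary $a < b$: picking $\f$ with $\f(0)=a$, $\f(1)=b$ yields $T(\alpha \delta_a + (1-\alpha)\delta_b) = a$ if $\alpha \geq \lambda$ and $=b$ if $\alpha < \lambda$. For general $F \in \M_c$ supported in $[a,b]$ and any $c \in (a,b)$, the two-point distribution $\bar{F}_c := F(c)\delta_c + (1-F(c))\delta_b$ satisfies $\bar{F}_c(x) \leq F(x)$ pointwise, i.e.\ $\bar{F}_c \geq_{st} F$; when $F(c) \geq \lambda$ the previous step gives $T(\bar{F}_c) = c$, so monotonicity forces $T(F) \leq c$, and taking the infimum yields $T(F) \leq q_\lambda^-(F)$. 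A symmetric argument with the lower-dominating distribution $F(c^-)\delta_a + (1-F(c^-))\delta_c$ gives $T(F) \geq c$ whenever $F(c^-) < \lambda$, and therefore $T(F) \geq q_\lambda^-(F)$, completing the identification.

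The most delicate step is the third one: pinning down the correct boundary value of $h$ at the threshold via the appropriate one-sided semicontinuity and excluding the endpoints $\lambda \in \{0,1\}$, which correspond to the trivial functionals $\esssup$ and $\essinf$ ruled out by the theorem statement. Once the threshold is correctly located, the sandwich step is routine, requiring only care with stochastic ordering and with the endpoints of the support.
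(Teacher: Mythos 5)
Your proof is correct. One caveat on the comparison: the paper does not actually prove Theorem~\ref{th:Chambers_1} --- it is quoted from \cite{C2009}, and the paper only sketches the first step of Chambers' argument --- so the comparison below is with that sketched strategy (and with the proof architecture of Theorem~\ref{th:axiomatization}, which the paper's own proof mirrors). Your first two steps coincide with Chambers': ordinal covariance forces normalization and forces $T$ to be boundary on two-point distributions, monotonicity turns $h(\alpha)=T(\alpha\delta_0+(1-\alpha)\delta_1)$ into a nonincreasing $\{0,1\}$-valued function with threshold $\la$, and the one-sided semicontinuity pins down $h(\la)$ and excludes the degenerate endpoint ($\la=0$ for the lower semicontinuous case, $\la=1$ for the upper one). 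Where you genuinely diverge is the extension step: Chambers (and the paper's proof of Theorem~\ref{th:axiomatization}) passes from dyadic to finitely supported distributions and then to general $F$ by approximating $F$ with a monotone sequence of discrete distributions and invoking the semicontinuity characterization of Proposition~\ref{lemma:FMP}. You instead sandwich a general $F\in\M_c$ directly between the two-point distributions $F(c)\delta_c+(1-F(c))\delta_b \geq_{st} F$ and $F(c^-)\delta_a+(1-F(c^-))\delta_c \leq_{st} F$ and squeeze with monotonicity alone. This is more elementary (no second appeal to weak convergence, no intermediate class of finitely supported laws) and it works here precisely because ordinal covariance transports the canonical two-point analysis to arbitrary atoms $a<b$; it would not survive the weakening to locality used in Theorem~\ref{th:axiomatization}, where $T$ can be internal on dyadic distributions and the limiting/locality machinery becomes unavoidable. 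The only points needing a word of care in your write-up are the degenerate cases in the sandwich (when $F(c)\in\{0,1\}$ the comparison law collapses to a Dirac mass, handled by normalization, and when $\{c\in(a,b):F(c)\geq\la\}$ is empty the bound $T(F)\leq b$ comes from monotonicity against $\delta_b$); these are routine and do not affect correctness.
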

\noindent
In order to cover also the case of functionals defined on distributions with unbounded support, Chambers introduced a slightly stronger notion of ordinal covariance, that involves also possibly unbounded transformations; we refer again to \cite{C2009} for the details. \\

The first step of Chambers' proof is to show that on a dyadic distribution of the form
\[
B_{x,y}^\la=\la \delta_x + (1-\la)\delta_y, \quad \text{ with }\la \in (0,1),
\]
an ordinal covariant functional $T$ is always \emph{boundary}, that is $T(B_{x,y}^\la)=x$ or $T(B_{x,y})=y$, and the same applies to any finitely supported distribution $F$, in the sense that the possible values of $T$ are only the point masses of $F$.~In contrast, under our weaker assumption of locality, the functional $T$ can also be \emph{internal} on dyadic variables, in the sense that it can happen that $T(B_{x,y}^\la) \in (x,y)$.~However, as we show in the Appendix, a crucial consequence of locality is that for each $\la \in (0,1)$ there can be only one possible internal value of $T$ on dyadic distributions, that we will denote with $z(\la)$. This will allow us to show that on the set of dyadic distributions the functional $T$ is a $\La$-quantile for a suitable nonincreasing $\La$, determined as a generalized inverse of $z(\la)$.~Finally, from locality and weak semicontinuity we will show that $T$ can be uniquely extended from dyadic distributions to the whole of $\M$.

\section{Conclusions and directions for further research}

In this short note we have given a comprehensive study of the properties of $\La$-quantiles in the case of a nonincreasing $\La$. Such functionals enjoy a few properties that are similar to the one of the usual quantiles, and can be axiomatized as in Theorem  \ref{th:axiomatization} by means of the locality property formalized in Definition \ref{def:locality}, that has been discussed in the statistical literature only on a purely qualitative basis. 

From a mathematical point of view, we underline at least two open problems: the first is the axiomatization of $\La$-quantiles for a general shape of the function $\La$, or under alternative assumptions (for example, in the nondecreasing case); the second is the characterization of $\La$-quantiles that have the CxLS property. We know that this is the case if $\La$ is nonincreasing and in a few other situations studied in \cite{BPR2017}, but a complete characterization is still lacking. 

From the point of view of financial and statistical applications, $\La$-quantiles with a nonincreasing $\La$ are elicitable, as noticed in \cite{BB2015} and further accurately discussed in \cite{BPR2017}, that is consistent scoring functions do exist.
This opens the way for the development of statistical procedures that generalize quantile regression and are the subject of current investigation.

\section{Appendix}


\begin{proof}[Proof of Proposition \ref{prop:properties}] 
a), b) and c) follow immediately from Definition \ref{def:Lambda}.
To show the first line of (d), notice that the thesis is trivially satisfied if 
$q_\La^-(F_1) = -\infty$ or $q_\La^-(F_2) = -\infty$. Otherwise,
let $\bar{z}=\min (q_\La^-(F_1),q_\La^-(F_2))$. 
From the definition of $q_\La^-$, it follows that for each $z < \bar{z}$ it holds that
$F_1(z)<\La (z)$ and $F_2(z)<\La (z)$, so $\lambda F_1(z)+(1-\lambda
)F_2(z)<\La (z)$. Again from the definition of $q_\La^-$, it follows that $q_\La^-(\lambda F+(1-\lambda )G)\geq \bar{z}$. The other cases of (d) follow similarly.
\end{proof}

\begin{proof}[Proof of Proposition \ref{prop:properties_2}]
a) If $\La $ is nonincreasing then 
$\left\{ F(x)<\La (x)\right\}$, $\left\{ F(x)\geq \La (x)\right\}$,
$\left\{ F(x)\leq \La
(x)\right\}$ and $\left\{ F(x)>\La (x)\right\} $, are (possibly empty) disjoint intervals whose union is $\R$, from which the thesis follows.
b) If $\La \equiv 0$ or $\La \equiv 1$ then $q_\La^-$ and $q_\La^+$ are not finite as we remarked after Definition \ref{def:quantiles}.
Conversely, if $\La \colon \R \to [0,1]$ is not identically equal to $0$ or $1$ then $\lim_{x\to
-\infty }\La (x)>0$ and $\lim_{x\to +\infty }\La (x)<1$, so there exists $x_{1}$ and $x_{2}$ such that $F(x)<\La
(x)$ for $x<x_1$ and $F(x)>\La(x)$ for $x>x_2$, from which it follows that
$x_1 \leq q_\La^-(F) \leq q_\La^+(F) \leq x_2$.
d) Assume by contradiction and without loss of generality that $q_{\La_1}^-(F)<q_{\La_2}^-(F)$. For each $z \in (q_{\La_1}^-(F), q_{\La_2}^-(F))$ it follows that $\La_1(z) \leq F(z) < \La_2(z)$, that gives a contradiction since the set $\{z \in \R \, | \, \La_1(z) = \La_2(z)\}$ is dense in $\R$. The proof for $q_\la^+$ is similar. 
e) follows from a) and Proposition \ref{prop:properties} item d). 
\end{proof}

In order to prove Proposition \ref{prop:continuity}, we first recall the following characterization of weak semicontinuity given in Proposition 2.5 from \cite{FMP2014}, slightly modified for our sign conventions.  
\begin{proposition}\label{lemma:FMP}
Let $T\colon \M\rightarrow \R$ be monotonic. 
Then $T$ is weakly lower semicontinuous if and only if 
\[
F_{n}(x)\downarrow F(x)\text{ in the continuity points of }F\Rightarrow
T(F_{n})\rightarrow T(F), 
\]%
and $T$ is weakly upper semicontinuous if and only if 
\[
F_{n}(x)\uparrow F(x)\text{ in the continuity points of }F\Rightarrow
T(F_{n})\rightarrow T(F).
\]%
\end{proposition}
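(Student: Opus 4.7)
The plan is to address the two biconditionals separately; I focus on the weakly lower semicontinuous characterization since the upper semicontinuous case is essentially symmetric (and in fact a bit simpler). For the forward implication, if $F_n(x) \downarrow F(x)$ at every continuity point of $F$, a standard right-continuity argument extends the inequality $F_n \geq F$ from the dense set of continuity points to all of $\R$, hence $F_n \leq_{st} F$, and monotonicity of $T$ gives $T(F_n) \leq T(F)$ for every $n$; together with the weak lower semicontinuity assumption $\liminf_n T(F_n) \geq T(F)$, this yields $T(F_n) \to T(F)$.

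For the converse the idea is to sandwich an arbitrary weakly convergent sequence between a monotone one and $F$ itself. Given $F_n \overset{d}{\to} F$, I would set
\[
G_n(x) := \inf_{y>x}\, \sup_{k \geq n} F_k(y),
\]
the right-continuous envelope of $\sup_{k \geq n} F_k$. By construction $G_n$ is right-continuous and non-increasing in $n$, and $G_n \geq F_n$ pointwise (since $\inf_{y>x} F_n(y) = F_n(x)$ by right-continuity of $F_n$). Showing that $G_n$ is a genuine distribution function requires $G_n(-\infty)=0$ and $G_n(+\infty)=1$, which follow from the tightness of the sequence $\{F_n\}$ (guaranteed by Prokhorov since $F_n$ converges weakly to a probability measure). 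At any continuity point $x$ of $F$ one has $G_n(x) \geq F(x)$ for all $n$, while picking continuity points $y \downarrow x$ with $F(y)$ close to $F(x)$ and using $\limsup_k F_k(y)=F(y)$ yields $G_n(x) \downarrow F(x)$.

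With the auxiliary sequence in hand, the assumed implication applied to $G_n \downarrow F$ gives $T(G_n) \to T(F)$, while monotonicity applied to $G_n \geq F_n$ (hence $G_n \leq_{st} F_n$) gives $T(G_n) \leq T(F_n)$; taking $\liminf_n$ on both sides yields $T(F) \leq \liminf_n T(F_n)$, which is weak lower semicontinuity.

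The weakly upper semicontinuous case is handled analogously with the dominated sequence $H_n(x) := \inf_{k \geq n} F_k(x)$, which is automatically right-continuous (the infimum of non-decreasing right-continuous functions is right-continuous) and thus requires no envelope. One verifies $H_n \leq F_n$, and $H_n(x) \uparrow \liminf_k F_k(x) = F(x)$ at each continuity point of $F$, and the same sandwich concludes that $\limsup_n T(F_n) \leq T(F)$. The main technical point throughout is the verification that the envelope $G_n$ is a bona fide distribution function---right-continuity by construction and correct tails via tightness; once this is in place, the remainder is a mechanical application of monotonicity and the given hypothesis.
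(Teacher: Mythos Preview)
The paper does not give its own proof of this proposition; it is merely recalled from \cite{FMP2014} (Proposition~2.5 there, with adjusted sign conventions). So there is no in-paper argument to compare against. That said, your argument is correct and self-contained.

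A couple of minor remarks on points you flag as delicate. For the forward direction, note that ``$F_n(x)\downarrow F(x)$ at continuity points of $F$'' already is weak convergence $F_n\overset{d}{\to}F$, so invoking lower semicontinuity is legitimate; the extension of $F_n\geq F$ from continuity points to all of $\R$ via right-continuity and density is standard and correct. For the converse, your envelope $G_n(x)=\inf_{y>x}\sup_{k\geq n}F_k(y)$ is indeed needed because a supremum of right-continuous nondecreasing functions need not be right-continuous (unlike an infimum, which is---your justification for $H_n$ is valid: if $x_m\downarrow x$ then $\limsup_m \inf_k f_k(x_m)\leq f_j(x)$ for every $j$ by right-continuity of $f_j$, hence $\leq \inf_j f_j(x)$). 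The check that $G_n$ is a genuine distribution function follows, as you say, from tightness of a weakly convergent sequence of probability measures (Prokhorov), which gives both $G_n(-\infty)=0$ and, in the symmetric case, $H_n(+\infty)=1$. Finally, the sandwich $T(G_n)\leq T(F_n)$ together with $T(G_n)\to T(F)$ delivers $\liminf_n T(F_n)\geq T(F)$ exactly as you write.

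In short: your proof fills in what the paper leaves as a citation, and does so correctly.
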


\begin{proof}[Proof of Proposition \ref{prop:continuity}]
Let us consider the case of $q_\La^-$. From Proposition \ref{lemma:FMP} it is enough to show that if $F_n \downarrow F$ in the continuity points of $F$, then $q_\La^-(F_n) \to q_\La^-(F)$. 
Let $z_n:=q_\La^-(F_n)$. From the monotonicity of $q_\La^-$ it follows that 
$z_n \uparrow z \leq q_\La^-(F)$. Assume by contradiction that $z < q_\La^-(F)$. Since the set of continuity points of $F$ is dense in $\R$, there exists $\bar{z}$, a continuity point of $F$, such that $z_n \leq z < \bar{z} < q_\La^-(F)$. 
From the definition of $q_\La^-(F)$ it follows that $F_n(\bar{z}) \geq \La(\bar{z})$ and $F(\bar{z}) < \La(\bar{z})$. Letting $n \to +\infty$ gives
$F (\bar{z}) \geq \La(\bar{z})$ and $F(\bar{z}) < \La(\bar{z})$, that is a contradiction. The proof for $q_\La^+$ is similar. 
\end{proof}

\begin{proof}[Proof of Corollary \ref{cor:continuity}] 
a) Let $F_n \overset{d}{\to} F$. From the weak lower semicontinuity of $q_\La^-$, the weak upper semicontinuity of $q_\La^+$ and Proposition \ref{prop:properties}, it follows that
\begin{align*}
q_\La^-(F) \leq \liminf_{n \to +\infty} q_\La^-(F_n) \leq 
\limsup_{n \to +\infty} q_\La^-(F_n) \leq \limsup_{n \to +\infty} q_\La^+(F_n) \leq 
q_\La^+(F).
\end{align*}
Since $q_\La^-(F) = q_\La^+(F)$ the inequalities are indeed equalities, that proves weak continuity of $q_\La^-$. The proof for $q_\La^+$ is similar. b) and c) follow from a) and from the uniqueness of $\La$-quantiles when either $F$ is increasing or $\La$ is decreasing. 
\end{proof}

\begin{proof}[Proof of Theorem \ref{th:axiomatization}]
The first step of the proof is to analyze the behaviour of $T$
on the set $\B$ of dyadic distributions 
of the form 
\[
B_{x,y}^\la:=\la \delta_x + (1-\la)\delta_y,
\]
for $x < y$ and $\la \in (0,1)$. 
From monotonicity and normalization it follows that 
\[
x \leq T(B_{x,y}^{\la}) \leq y.
\]
We say that $T$ is \emph{internal} on $B_{x,y}^{\la} \in \B$ if $T(B_{x,y}^{\la}) \in (x,y)$ and that $T$ is \emph{boundary} on $B_{x,y}^{\la}$ if $T(B_{x,y}^{\la}) =x $ or $T(B_{x,y}^{\la}) =y$. 
For each fixed $\la \in (0,1)$, there are two alternatives:

\begin{itemize}
\item [i)] there exists $x,y$ such that $T$ is internal on $B_{x,y}^{\la}$
\item [ii)] for each $x,y$, $T$ is boundary on $B_{x,y}^{\la}$. 
\end{itemize}
\noindent
Clearly, i) and ii) are mutually exclusive; we denote with $I_1$ the set of $\la \in (0,1)$ such that i) holds and with
$I_2$ the set of $\la$ such that ii) holds. Notice that either $I_1$ or $I_2$ can be empty; for example in the case of the usual quantiles $I_1=\emptyset$. We will discuss the various cases at a later stage of the proof. 

Assume that $\la \in I_1$ and let $\bar{z}:=T(B_{x,y}^{\la})$, so $\bar{z} \in (x,y)$. We first show that such a $\bar{z}$ is uniquely determined by $\la$ and does not depend on a particular choice of $x$ and $y$. Indeed, assume by contradiction that there exists $x_1,x_2,y_1,y_2$ such that 
\begin{align*}
T(B_{x_1,y_1}^{\la}) &= \bar{z}_1 \in (x_1, y_1), \\
T(B_{x_2,y_2}^{\la}) &= \bar{z}_2 \in (x_2, y_2), 
\end{align*}
with $\bar{z}_1 \neq \bar{z}_2$. From locality, it can be assumed w.l.o.g. that $x_1 < \bar{z}_1 < y_1 < x_2 <  \bar{z}_2 < y_2$. But then again from locality it would follow that $T(B_{x_1,y_2}^{\la}) =T(B_{x_1,y_1}^{\la})=\bar{z}_1$ and also $T(B_{x_1,y_2}^{\la}) = T(B_{x_2,y_2}^{\la})=\bar{z}_2$, that gives a contradiction. Hence, to each $\la \in I_1$ it is possible to associate a unique internal value $\bar{z}$ of the functional $T$ on $B_{x,y}^{\la}$, that we will denote with a slight abuse of notation by $\bar{z}(\la)$. Notice also that if $\la \in I_1$ and $y \leq z(\la)$, then again by locality it must hold that $T(B_{x,y}^\la)=y$, and similary if $x \geq z(\la)$ then $T(B_{x,y}^\la)=x$. Thus on dyadic variables with $\la \in I_1$, the function $\bar{z}$ completely determines $T$. 

We now show that on $I_1$, $\bar{z}(\la)$ is a nonincreasing function of $\la$. Indeed, assume by contradiction that there exists $\la_1 < \la_2$ such that $T(B_{x_1,y_1}^{\la_1}) = \bar{z}_1 \in (x_1,y_1)$ and 
$T(B_{x_2,y_2}^{\la_2}) = \bar{z}_2 \in (x_2,y_2)$, with $\bar{z}_1 < \bar{z}_2$. As before, from locality we can assume w.l.o.g. that $x_1 < \bar{z}_1 < y_1 < x_2 <  \bar{z}_2 < y_2$, and it would follow that 
$T(B_{x_1,y_2}^{\la_1}) = \bar{z}_1$ and $T(B_{x_1,y_2}^{\la_2}) = \bar{z}_2$, but since
\[
B_{x_1,y_2}^{\la_1} \geq_{st} B_{x_1,y_2}^{\la_2},
\]
from monotonicity we would get that $\bar{z}_1 \geq \bar{z}_2$, that gives a contradiction. 

We now show that the set $I_1$ is an interval. Let $\la_1, \la_2 \in I_1$ with $\la_1 < \la_2$, so there exists $x_1,y_1$, $x_2,y_2$ such that $T$ is internal on $B_{x_1,y_1}^{\la_1}$ and on $B_{x_2,y_2}^{\la_2}$. Letting $\bar{z}_1=T(B_{x_1,y_1}^{\la_1})$ and $\bar{z}_2=T(B_{x_2,y_2}^{\la_2})$, from the monotonicity of $\bar{z}(\la)$ on $I_1$ it follows that $\bar{z}_1 \geq \bar{z}_2$, so $x_2 < y_1$. For each $\la \in (\la_1, \la_2)$, it holds 
\[
B_{x_2,y_1}^{\la_1} \geq_{st} B_{x_2,y_1}^{\la} \geq_{st} B_{x_2,y_1}^{\la_2},
\]
hence from monotonicity
\[
y_1 > \bar{z}_1 = T(B_{x_2,y_1}^{\la_1}) \geq T(B_{x_2,y_1}^{\la}) \geq T(B_{x_2,y_1}^{\la_2})=\bar{z}_2 > x_2,
\]
and since that $T(B_{x_2,y_1}^\la) \in (x_2,y_1)$ it follows that $\la \in I_1$. 

Let us now consider the case $\lambda \in I_2$, in which by definition $T$ is boundary on each $B_{x,y}^\la$. \\
To ease the exposition, we say that $T$ is \emph{left} on $B_{x,y}^\la$ if $T(B_{x,y}^\la)=x$ and that $T$ is \emph{right} on $B_{x,y}^\la$ if $T(B_{x,y}^\la)=y$. For a fixed $\la \in (0,1)$, we first show that either $T$ is right for each $B_{x,y}^\la \in \B$ or $T$ is left for each $B_{x,y}^\la \in \B$. Indeed, assume by contradiction that there exists $x_1, y_1$, $x_2, y_2$ with $T(B_{x_1,y_1}^{\la}) = x_1$ and 
$T(B_{x_2,y_2}^{\la}) = y_2$. There are two possible cases: a) $x_1 <y_2$, b) $x_1 \geq y_2$. In case a), from locality it is possible to assume w.l.o.g. that $x_1 < y_1 < x_2 < y_2$, that gives a contradiction since $T(B_{x_1,y_2}^\la)$ should be at the same time equal to $x_1$ and to $y_2$. In case b), it follows that $x_2 < y_2 \leq x_1 < y_1$, and since 
\[
B_{x_1,y_1}^{\la}\geq_{st} B_{x_2,y_1}^{\la} \geq_{st} B_{x_2,y_2}^{\la},
\]
from monotononicity it follows that 
\[
y_1 > x_1=T(B_{x_1,y_1}^{\la})\geq T(B_{x_2,y_1}^{\la}) \geq T(B_{x_2,y_2}^{\la})=y_2 >x_2,
\]
so $T$ would be internal on $B_{x_2,y_1}^{\la}$, contradicting the assumption $\la \in I_2$. 

Let now $\la_1 \in I_1$ and let $\la_2 \in I_2$, with $\la_2 > \la_1$. We show that $T
$ must be left on each $B_{x,y}^{\la_2}$. Indeed, from the definition of $I_1$ there exists $x_1, y_1$ such that $T(B_{x_1,y_1}^{\la_1}) = z_1 \in (x_1, y_1)$,
and assuming by contradiction that $T$ is right on each $B_{x,y}^{\la_2}$ it would follow that 
$T(B_{x_1,y_1}^{\la_2}) = y_1$. But then monotonicity leads to a contradiction since 
\[
B_{x_1,y_1}^{\la_2} \leq_{st} B_{x_1,y_1}^{\la_1}
\]
and 
\[
T(B_{x_1,y_1}^{\la_2}) =y_1 > z_1 = T(B_{x_1,y_1}^{\la_1}).
\]
A similar argument shows that if  $\la_1 \in I_1$ and $\la_2 \in I_2$, with $\la_2 < \la_1$, then $T$ must be right on each $B_{x,y}^{\la_2}$. Further, the same argument shows also that if $\la_1 \in I_2$ with $T$ always left and $\la_2 > \la_1$, then $\la_2 \in I_2$ and $T$ is always left, while if $\la_1 \in I_2$ with $T$ always right and $\la_2 < \la_1$, then $\la_2 \in I_2$ and $T$ is always right. 

Denoting with $I_2^r$ and $I_2^\ell$ respectively the subintervals of $I_2$ such that $T$ is either always right or left, the previous analysis shows that there are the following seven possibilities:
a) $I=I_2^r$,  
b) $I=I_2^\ell$, 
c) $I=I_2^r \cup I_2^\ell$,
d) $I=I_1$,
e) $I=I_2^r  \cup I_1$,
f) $I=I_1 \cup I_2^\ell$,
g) $I=I_2^r  \cup I_1 \cup I_2^\ell$,
where the unions of intervals are ordered and disjoint. In all the above cases, the function $z \colon (0,1) \to \R \cup \{+\infty\} \cup \{-\infty\}$ defined by
\[
z(\la):=
\begin{cases}
+\infty &\text{ if } \la \in I_2^r\\
\bar{z}(\la) &\text{ if } \la \in I_1 \\
-\infty &\text{ if } \la \in I_2^\ell
\end{cases}
\]
is nonincreasing and finite if and only if $I_2=\emptyset$. Moreover, the previous analysis shows that the function $z$ completely determines $T$ on $\B$ in the sense that
\begin{align}
y \leq z(\la) &\Rightarrow T(B_{x,y}^\la)=y \notag \\
x \geq z(\la) &\Rightarrow T(B_{x,y}^\la)=x \label{cond} \\
x < z(\la) < y &\Rightarrow T(B_{x,y}^\la)=z(\la) \notag.
\end{align}

The second step of the proof is to show that the functional $T$ restricted on $\B$ is actually a $\La$-quantile for a suitable $\La$. For this it is necessary to use the weak semicontinuity assumption. 
We consider the case in which $T$ is weakly lower semicontinuous, the other case being similar. 
Notice that from Lemma \ref{lemma:FMP} it follows immediately that $z(\la)$ is right continuous. Define
\begin{equation}\label{LambdaT}
\La(x) := \inf \{ \la \in (0,1) \, | \, z(\la) \leq x \},
\end{equation} 
with the convention that $\inf \emptyset = 1$. Since $z(\la)$ is nonincreasing and right continuous, then 
\begin{equation}\label{conn}
\la < \La(x) \iff z(\la) > x \text { and } \la \geq \La(x) \iff z(\la) \leq x. 
\end{equation}
Clearly, $\La \colon \R \to [0,1]$ is nonincreasing. We show that 
\begin{equation} \label{qLaleft}
T(F)=q_\La^-(F), \text{ for each } F \in \B,
\end{equation} 
where $\La$ is defined in \eqref{LambdaT}.
To prove it we examine the cases a) - g) outlined before. In case a), $T$ is right on each dyadic variable, so $T(F)=\esssup(F)$. Since $z(\la)=+\infty$, from \eqref{LambdaT} we have $\La(x)=1$, and indeed $T(F)=q_1^-(F)$ on $\B$. Similarly, in case b) $T$ is always left, that would correspond to $T(F)=\essinf(F)$, but this case cannot occur because of the lower semicontinuity assumption. Case c) corresponds to the usual left quantiles, where 
\[
z(\la) =
\begin{cases}
+ \infty \text{ if } 0 < \la < \alpha \\
-\infty \text{ if } \alpha \leq \la <1
\end{cases}
\]
and from \eqref{LambdaT} it follows that $\La(x) = \alpha$, where $\alpha= \sup I_1 = \inf I_2$. In case d), we verify that $q_\La^-$, where $\La$ is given by \eqref{LambdaT}, satisfies conditions \eqref{cond}, that uniquely determine $T$ on $\B$. Indeed, using \eqref{conn}, we have that
\begin{align*}
y \leq z(\la) &\Rightarrow y-\epsilon < z(\la) \Rightarrow \La(y-\epsilon) > \la \Rightarrow q_\La^-=y\\
x \geq z(\la) &\Rightarrow \La(x) \leq \lambda \Rightarrow q_\La^-=x \\
x < z(\la) < y &\Rightarrow \La(z(\la)- \epsilon) > \la, \, \La(z(\la)) \leq \la \Rightarrow q_\La^-=z(\la). 
\end{align*}
Cases e), f), g) are similar and omitted for brevity. 
If instead $T$ is weakly upper semicontinuous, a similar argument can be applied to show that 
\begin{equation}\label{qLaright}
T(F)=q_\La^+(F), \text{ for each } F \in \B.
\end{equation} 

The third step of the proof is to show that $T$ is a $\La$-quantile with $\La$ given by \eqref{LambdaT} also on the set $\D$ of finitely supported distributions. Let $F \in \D$ be given by $F(x)=\sum_{i=1}^{n-1} \lambda_i \mathbf{1}_{[x_i, x_{i+1})} + \mathbf{1}_{[x_n, +\infty)}$, with $x_1 < \dots < x_n$ and $\lambda_1 < \dots < \lambda_{n-1}$.

There are two possibilities. If $T(F) \in (x_j, x_{j+1})$ for some $j=1, \dots, n-1$, then respectively from locality of $T$, the fact that $T$ coincides with $q_\La^-$ on $\B$, and locality of $q_\La^-$, it follows that 
\[
T(F)=T(B_{x_j,x_{j+1}}^{\la_j}) = q_\La^-(B_{x_j,x_{j+1}}^{\la_j}) = q_\La^-(F). 
\]
\noindent
If instead $T(F) =x_j$ for some $j=1, \dots, n$, there are two further subcases. If $j=1$ or $j=n$ the thesis follows immediately by locality as above. If $1 < j < n$, assume by contradiction that $q_\La^-(F) >T(F)$, and let $x_k \geq q_\La^-(F)$. From locality of $q_\La^-(F)$, it follows that $q_\La^-(F)=q_\La^-(\tilde{F})$, where $\tilde{F} \in \B$ is defined by
\[
\tilde{F}(x)=
\begin{cases}
0 &\text{ if } x < x_1 \\
\lambda_{j} &\text { if } x_{1} \leq x < x_k\\
1 &\text { if } x \geq x_k
\end{cases}
\]
and satisfies $\tilde{F}\leq_{st} F$. This leads to a contradiction since 
\[
x_j < q_\La^-(F) =  q_\La^-(\tilde{F}) =T(\tilde{F}) \leq T(F)=x_j. 
\]
A similar contradiction arises if $q_\La^-(F) <T(F)$, thus showing that $T(F)=q_\La^-(F)$ on $\mathcal{D}$.

As a fourth step, we prove that $T(F)=q_\La^-(F)$ for each $F \in \M_c$. If $F$ has bounded support, there exist a sequence of finitely supported distribution functions $F_n \in \mathcal{D}$ such that $F_n \downarrow F$ in the continuity points of $F$. By the previous step, $T(F_n)=q_\La^-(F_n)$, and by weakly lower semicontinuity of $T$ and $q_\La^-$ and Proposition \ref{lemma:FMP} it follows that 
\[
T(F)=\lim_{n \to +\infty}T(F_n) = \lim_{n \to + \infty} q_\La^-(F_n)=q_\La^-(F).
\]
\noindent 
Finally, the thesis can be obtained for a general $F \in \M$ by locality. Let $z=T(F)$ and let 
\[
G(t)=
\begin{cases}
0 &\text { if } t \leq z - 1\\
F(t) &\text { if } z-1 < t \leq z+1\\
1 &\text { otherwise}
\end{cases}
\]
Then $T(F)=T(G)=q_\La^-(G)=q_\La^-(F)$, that concludes the proof. 
\end{proof}

\section*{Acknowledgments}
We are grateful to the conference participants at the 11th ERCIM Conference in Pisa, the 9th General AMaMeF Conference in Paris, the 43nd AMASES Meeting in Perugia, and to the participants at the seminars at the University of Waterloo and University of Lyon I
for their useful comments and suggestions.

\bibliography{references}

\begin{thebibliography}{99}

\bibitem[Bellini and Bignozzi(2015)]{BB2015} Bellini, F., \& Bignozzi, V. (2015). On elicitable risk measures. \emph{Quantitative Finance}, 15(5), 725-733.

\bibitem[Burzoni et al.(2017)]{BPR2017} Burzoni, M., Peri, I., \& Ruffo, C.M. (2017). On the properties of the Lambda Value at Risk: robustness, elicitability and consistency. \emph{Quantitative Finance } 17(11), 1735-1743

\bibitem[Bignozzi et al.(2019)]{BBM2019} Bignozzi, V., Burzoni, M., \& Munari, C. (2019). Risk measures based on benchmark loss distributions. \emph{Journal of Risk and Insurance}, 87(2), 437-475.

\bibitem[Chambers(2007)]{C2007} Chambers, C. (2007). Ordinal aggregation and quantiles. \emph{Journal of Economic Theory}, 137(1), 416-431.

\bibitem[Chambers(2009)]{C2009} Chambers, C. (2009). An axiomatization of quantiles on the domain of distribution functions. \emph{Mathematical Finance}, 19(2), 335-342.

\bibitem[Corbetta and Peri(2018)]{CP2018} Corbetta, J., Peri, I. (2018). Backtesting Lambda Value at Risk. \emph{The European Journal of Finance}, 24(13), 1075-1087.

\bibitem[Corbetta and Peri(2019)]{CP2019} Corbetta, J., Peri, I. (2019). A New Approach to Backtesting and Risk Model Selection. \emph{SSRN Working Paper}.

\bibitem[Davis(2016)]{D2016} Davis, M.H.A. (2016).  Verification of internal risk measure estimates.  
\emph{Statistics and Risk Modeling}, 33(3-4), 67-93.

\bibitem[Delbaen et al.(2016)]{DBBZ2016} Delbaen, F., Bellini, F. Bignozzi, V., Ziegel, J. 
Risk measures with the CxLS property. \emph{Finance and Stochastics}, 20(2), 433-453.

\bibitem[Frittelli et al.(2014)]{FMP2014} Frittelli, M., Maggis, M., \& Peri, I. (2014). Risk Measures on $\mathcal{P}(\R)$ and Value at Risk with Probability/Loss function. \emph{Mathematical Finance}, 24(3), 442-463.



\bibitem[Hitaj et al.(2018)]{HMP2018} Hitaj, A., Mateus, C., \& Peri, I. (2018). Lambda Value at Risk and Regulatory Capital: A Dynamic Approach to Tail Risk.\emph{Risks}, 6(1), 17.


\bibitem[Osband(1985)]{O1985} Osband, K. H. Providing incentives for better cost forecasting. PhD Thesis, University of california, Berkeley, 1985. 

\bibitem[Koenker(2017)]{K2017} Koenker, R. (2017). Quantile Regression: 40 Years on. \emph{Annual Review of Economics}, 9, 155-176. 


%
\end{thebibliography}

\end{document}